\newtheorem{Theorem}{Theorem}
\newtheorem{Proposition}{Proposition}
\newtheorem{Comment}{Comment}
\begin{document}
\title{Optimal Energy Distribution with Energy Packet Networks}
%
%\titlerunning{Abbreviated paper title}
% If the paper title is too long for the running head, you can set
% an abbreviated paper title here
%
\author{Yunxiao Zhang }
\authorrunning{Y. Zhang}
% First names are abbreviated in the running head.
% If there are more than two authors, 'et al.' is used.
%
\institute{Imperial College London \\
\email{yunxiao.zhang15@imperial.ac.uk}\\
}
\maketitle              % typeset the header of the contribution
\begin{abstract}
We use Energy Packet Network paradigms to investigate energy distribution problems in a computer system with energy harvesting and storages units. Our goal is to minimize both the overall average response time of jobs at workstations and the total rate of energy lost in the network. Energy is lost when it arrives at idle workstations which are empty. Energy is also lost in storage leakages. We assume that the total rate of energy harvesting and the rate of jobs arriving at workstations are known. We also consider a special case in which the total rate of energy harvesting is sufficiently large so that workstations are less busy. In this case, energy is more likely to be sent to an idle workstation. Optimal solutions are obtained which minimize both the overall response time and energy loss under the constraint of a fixed energy harvesting rate.

\keywords{Energy distributions, Energy harvesting, Energy Packet Networks, G-networks, Optimization, Product-form solution, Renewable energy.}
\end{abstract}
\section{Introduction}
%% What's the problem being considiered

Although conventional fossil fuels are still the primary energy source in the world today, demands on alternative energy sources which are renewable, environmentally friendly trigger many investigations into energy harvesting and energy storage systems.   
There are recent researches on understanding challenges and opportunities in energy harvesting and storage. 
The paper \cite{Challenges, ESS_grid} reviews the technologies of energy storage systems and the benefits regarding finance and reliability, which energy storages can bring in power systems. It also emphasizes the importance to develop small to medium-sized energy storages near to renewable energy sources which are in a distributed manner in the near future. Energy harvesting and storage are not only playing a critical role in power systems, but they also have many applications in computer and communication systems that support the Internet of Things (IoT). 

The critical increase in energy consumption in information and communication technologies (ICT) \cite{Caseau} creates a huge need for the parsimonious energy usage in ICT systems. Thus early research \cite{Power1,Power2} has addressed the need for power-aware routing in wireless networks where the number of hops that a packet traverses, including in simple opportunistic networks \cite{Opportunistic}, also increase the network's energy consumption \cite{Packet}. The interplay of routing and energy consumption in wired networks has been studied in \cite{Morfo,Toktam}. Energy savings in Cloud Computing has been studied in \cite{Berl}, where adaptive task scheduling methods can also improve energy savings \cite{Lent-Task,Wang18}. 

A complex interconnected distributed computer system is considered in \cite{gelenbe2015central}. To improve sustainability, the system is operated with energy harvesting nodes and energy storage units. It investigates distributed and centralized energy storage units and compares the overall response time of data packets in two types of storages. In \cite{EH_optimal, OptimalPacket_Yang}, it considers energy harvesting in wireless communication systems comprised of rechargeable nodes. It studies how to control and optimize the energy consumption and quality of service (QoS) subject to finite energy storage capacity and causality constraints. In \cite{paperOne}, it introduces a framework of energy cooperation and energy sharing in communication networks where users transmit messages using energy harvested from nature. 
Authors consider several multi-user scenarios, where energy can be transferred from one user to another through a separate wireless transfer unit. Optimum transmit power and energy transfer policies are determined by a Lagrangian approach and the two-dimensional directional water-filling algorithm. In \cite{paperTwo}, authors consider an energy harvesting, two user cooperative Gaussian multiple access channel (MAC), where both users harvest energy from nature. Authors study two scenarios with this model. In the first scenario, both data and energy arrive intermittently at the nodes. In the second scenario, the data packets are available at the beginning of the transmission. Users can cooperate at the physical layer (data cooperation) and the battery level (energy cooperation). In \cite{paperFour}, authors investigate a queueing model of wireless communication base stations (BSs) which is fully powered by renewable energy sources. The BS can dynamically adjust its coverage area, the number of mobiles with which it communicates, offered rate and energy consumption. A summary of recent contributions in energy harvesting wireless communications and wireless energy  transfer from the perspectives of communication theory, signal processing, information theory and wireless networking is in \cite{paperThree}.

In this paper, we focus on a specific smart energy distribution problem in computer systems with energy harvesting and storages. We consider a system consisted of workstations (WSs), each of which is powered by one interconnected energy storage (ES). ESs are charged by one energy harvesting unit with an intermittent energy source, such as wind turbines or photovoltaic solar systems. In the energy distribution problem, we assume energy cannot be transferred between ESs and WSs do not exchange jobs among them. We also assume that the system receives energy at a fixed rate. The problem is to choose an optimal energy distribution which determines portions of the fixed harvested energy rate that are sent to WSs, so as to minimize the overall average response time of jobs and energy loss. Energy may be lost in ES leakages. We also assume that energy is lost if it is sent to an idle WS which has no job.

We use Energy Packets Networks (EPNs) \cite{gelenbe2011energy,gelenbe2012energy,Compsac2014,Fourneau} as an approach to investigate these considerations. The EPN is a discretized state-space framework of a family of queueing model known as G-networks \cite{gelenbe1993trigger,Multiple}. A simplified model of G-networks was first proposed in the early 1990s \cite{Stable}. It considers an open network of $v$ queues that have mutually independent and identically distributed (i.i.d.) exponential service time of rates $r(1),\dots,r(v)$. 

In the early G-network papers \cite{Gelenbe1994unifying}, there are two types of customers: positive and negative customers that arrive at $i^{th}$ queues according to a Poisson process of rates $\Lambda_i^+$ and $\Lambda_i^-$ respectively.  Queues can only be constituted by positive customers, while one negative customer arriving at a non-empty queue can either move a positive customer to another queue or remove a positive customer out of the network. Then the negative customer disappears. Moreover, a product-form solution exists in G-networks. It is a joint probability distribution of the number of positive customers at queues in the steady state. Up to today, a series of G-network papers regarding G-networks with batch removals \cite{gelenbe_1993BatchRemove}, multi-class 
G-networks \cite{Multiple}, triggers \cite{gelenbe1993trigger}, resets \cite{reset} and adders \cite{Adder} are published.

Since EPN models are special cases of G-networks, we can represent EPN models as G-networks. In EPN models, jobs arrive at servers or WSs, each of which is powered by ESs. Energy is represented in the form of energy packets (EPs), each of which is a fixed amount of energy in Joules. EPs at ESs and jobs at WSs are positive customers that constitute ESs and WSs respectively. However, EPs can be moved from ESs to WSs. At WSs, EPs become negative customers that can move one job to another WS or remove jobs out of the network. In this energy distribution problem, the EPN is modeled as a special case of G-networks with batch removal where one EP can be used to execute a batch of jobs and remove executed jobs out of the system. The size of a batch follows a probability distribution which is known in this paper.  Moreover, we assume that one ES is interconnected with only one WS.  

Recently, EPN models and G-networks are used in various applications. In \cite{gelenbe2010frameworkpacket}, it proposes an on-line routing algorithm to reduce energy consumption and overall response time of data packets in packet networks. Moreover, a set of utility functions regarding QoS and energy storages are given in \cite{Access}. By means of the product-form solution, the probability that the number of EPs at energy storages is greater than a number $k$, is given in the utility function. Moreover, EPN is also used in \cite{NOLTA} to investigate the backhaul of mobile networks that operate with intermittent renewable energy. EPN models also generate interest in modeling and optimizing sensor networks powered by energy harvesting units \cite{Fourneau, Marin,yin2017EPN, Yasin, Kadi}.
Furthermore, another approach known as power packet is proposed in \cite{takahashi2012estimation,takahashi2013design}. Power and information data can be carried in power packets. It is tested by hardware systems which provide insights into how to build a smart electricity distribution system.

Our contributions are as follows:
\begin{enumerate}
\item The developed EPN model is a special case of a family of queening network known as G-Networks that are developed in a series of papers starting around 1990 up to today. In previous researches, it is  assumed that processing at the workstation can only occur if one energy packet arrives. The workstation either sends one job which has just completed to another workstation for more processing or removes one job from the network. Hence, one single job can consume only one EP at a time. In this paper, we consider a more complex  EPN model in which one single energy packet can be consumed to process a variable number of jobs (instead of a single job).
\item A new composite cost function is proposed. In the cost function, we consider both the overall average response time of jobs and the total rate of energy lost in the network.
\item Solutions of the composite cost function can easily be obtained by solving a set of equations analytically.   
\end{enumerate}

The paper is organized as follows. In Section \ref{EPN_Model}, we introduce the EPN model and its product-form solution which is the joint probability distribution of the numbers of jobs and EPs at WSs and ESs respectively. A related optimization problem of energy distribution is discussed Section \ref{opt_problem_with_GA}. We concern both the overall response time of jobs and the energy loss. In Section \ref{ana_problem_with_GA}, we analyze the optimization problem under a geometric assumption. Under the assumption, analytical solutions are obtained in the case of $N=2$ (two pairs of WSs and ESs) and the case of $N \geq 3$ (more than two pairs of WSs and ESs). A proposition is also given to guarantee the existence of a globally optimal solution. A special case of the optimization problem is discussed in Section \ref{problem_special} where the total rate of harvesting energy is assumed to be sufficiently large. An analytical solution is also obtained for the special case. Numerical examples are presented in each case to illustrate obtained optimal solutions. Conclusions are given in Section \ref{conclusions}.
 
\section{Energy Packet Network Model}
\label{EPN_Model}
% Introduce and defind all notation and symobls (Discuss EPN)
We consider an EPN model which is an open network with  $v=2N$ of queues. Queues $1,\dots,N$ are WSs and queues $N+1,\dots,2N$ are ESs. Note that $N$ is the number of pairs of one WS and one ES.
Hence, the $i^{th}$ queue is the $i^{th}$ WS and
the $(i+N)^{th}$ queue is the $i^{th}$ ES where $i \in \{1,\dots,N \}$.
Each WS is interconnected with an ES which provides energy to this WS. 
However, we assume that EPs and jobs cannot be moved between ESs and WSs respectively.
Moreover, there are two Classes of customers in the network. Class 1 customers are jobs or tasks which are needed to be executed, while Class 2 customers are EPs which correspond to the amount of energy required to execute one or more than one jobs.   

There are also two Types of customers: \textit{positive} or \textit{negative customers}. Other Types of customers, e.g., \textit{triggers} \cite{gelenbe1993trigger}, \textit{resets} \cite{reset} and \textit{adders} \cite{Adder} are not used in this EPN model. 
\begin{enumerate}
\item Positive customers are normal customers which accumulate at queues. They are needed to be executed. After finishing execution, positive customers may be moved to other queues or removed from the network. Note that we assume positive customers cannot be moved in this paper. 
\item Negative customers do not request any service. When negative customers of Class $c$ arrive at a queue, they will instantaneously remove a batch of positive customers of Class $c'$. Then negative customers disappear. For example, an EP is instantaneously consumed to execute a batch of jobs when the EP arrives at a WS. Jobs are removed from the WS after finishing execution. Moreover, negatives customers have no effects on an empty queue. Negative customers merely disappear if they arrive at an empty queue. 
\end{enumerate}

For the $i^{th}$ WS denoted $W_i$, it has following manner:
\begin{enumerate}
\item Positive customers at $W_i$ can only be jobs. External jobs arrive at $W_i$ according to a Possion process at the rate of $\lambda_i$ jobs/sec. Moreover, the state of $W_i$ is represented by the number of jobs at $W_i$. At time $t$, it is denoted as $K_i(t)$.
\item The interconnected ES will send EPs to $W_i$. These EPs are negatives customers when they arrive at $W_i$. At time $t$, one EP is instantaneously consumed to execute a batch of jobs (positive customers at $W_i$) with size $\max[K_i(t),b_i]$. The $b_i$ is a random variable with probability distribution,
\begin{equation}
\pi_{i,s}= \Pr [b_i=s],~ s=1,2,\dots 
\end{equation} 
and 
\begin{equation}
\sum_{s=1}^\infty \pi_{i,s}=1 \mbox{ for all } i=1,\dots,N.
\end{equation}
After finishing execution, jobs are removed from $W_i$ and the EP disappears. Hence, at next time instance $t^+$, the state of $W_i$ is 
\begin{equation}
K_i(t^+)=
\begin{cases}
0, & \mbox{ if } b_i \geq K_i(t) \\
K_i(t)-b_i, & \mbox{ otherwise}
\end{cases}.
\end{equation}
\item If $W_i$ is empty at time $t$, such that $K_i(t)=0$, EPs have no effect on $W_i$. They are merely consumed to maintain $W_i$'s operation. Then EPs disappear.  
\end{enumerate}

For the $i^{th}$ ES denoted $E_i$, it has following manner:
\begin{enumerate}
\item The $i^{th}$ ES is interconnected to the $i^{th}$ WS, such that $W_i$ is powered by $E_i$. 
\item Positive customers at $E_i$ can only be EPs. External EPs arrive $E_i$ according to a Poisson process at the rate of $\gamma_i=\gamma p_i$ EPs/sec. The $\gamma$ denotes the total rate of EPs which are harvested by the network. The $p_i \in [0,1]$ are  probability variable, such that $\sum_{i=1}^N p_i =1$. Moreover, the state of $E_i$ is also represented by the number of EPs at $E_i$. At time $t$, it denotes $B_i(t)$.
\item When $B_i(t)>0$, EPs may be lost with an exponential service time at the rate of $\delta_i$ EPs/sec, or EPs may be sent to the interconnected WS with an exponential service time at the rate of $w_i$ EPs/sec. At time $t^+$, the state of $E_i$ becomes
\begin{equation}
B_i(t^+)=B_i(t)-1.
\end{equation}
\end{enumerate}

The considered EPN system is schematically presented in  Figure \ref{fig1}.
\begin{figure}[h]
	\centering
	\includegraphics[width=0.5\linewidth]{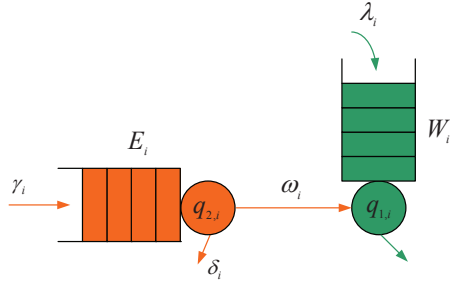}  
	\caption{Schematic representation of the pair of the $i^{th}$ WS and the $i^{th}$ ES. }%The EPs are accumulated in the ESs (amber) and  jobs are accumulated in the WSs (green). The EPs in $E_{i}$ either can be forward to the corresponding $W_{i}$ on demand or moved to some other ES nodes to balance energy distribution. The jobs in the $W_{i}$ can be finished processing locally or forwarded to some other WS for further steps of execution. }
	\label{fig1}
\end{figure}

\subsection{The product-form solution of the EPN model}
The proposed EPN model can be analyzed as a G-network with batch removal. We recall a \textit{product-form solution} (PFS) which can help us to analyze the EPN model. 

Regarding to the G-network model in \cite{gelenbe_1993BatchRemove}, we consider following \textit{traffic equations}:
\begin{align}
&\begin{cases}
\Lambda_{1,i}^+&=\lambda_i,\\
\Lambda_{1,i}^-&=q_{2,i+N} w_i \big[ \frac{1-\sum_{s=1}^\infty q_{1,i}^s \pi_{i,s}}{1-q_{1,i}} \big], \\
r_{1,i}&=0\\
\Lambda_{2,i+N}^+ &=\gamma_i,  \\
\Lambda_{2,i+N}^-&=0,  \\ 
r_{2,i+N}&=w_i+\delta_i.
\end{cases}
\label{eq:traffic}
\end{align}
and 
\begin{align}
q_{1,i}=\frac{\Lambda_{1,i}^+}{r_{1,i}+\Lambda_{1,i}^-}; \quad
q_{2,i+N}=\frac{\Lambda_{2,i+N}^+}{r_{2,i+N}+\Lambda_{2,i+N}^-},
\end{align}
where $i \in \{1,\dots,N\}$. The $q_{1,i}$ denotes the probability that the $i^{th}$ queue has at least one job. Since queues $1,\dots,N$ are WSs which cannot store EPs, the $q_{2,i}=0$ for all $i \in \{1,\dots,N\}$. Moreover, the $q_{2,i+N}$ denotes the probability that the $(i+N)^{th}$ queue has at least one EP. Analogously, the $q_{1,i+N}=0$ for all $i \in \{1,\dots,N\}$ since queues $N+1,\dots,2N$ are ESs at which jobs cannot wait to be served. Note that, $\lambda_i$, $w_i$ and $\delta_i$ are the parameters of the $i^{th}$ ES which is the $(i+N)^{th}$ queue.  We use subscript $i+N$ in $\Lambda_{2,i+N}^+$, $\Lambda_{2,i+N}^-$ and $r_{2,i+N}$  since they are related to the traffic equation of the $(i+N)^{th}$ queue.

\begin{Theorem}
\label{theorem1}
Let $K(t)=(K_1(t),\dots,K_N(t))$ and $B(t)=(B_1(t),\dots,B_N(t))$. If a non-negative solution of traffic equations given in (\ref{eq:traffic}) exists such that $0< q_{c,i} <1$ for $c=1,2$ and $i=1,\dots,N$, then the PFS:
\begin{align}
\lim_{t \rightarrow \infty} \Pr [(K(t),B(t))=(k_1,\dots,k_N,b_1,\dots,b_N)] =
\prod_{i=1}^N q_{1,i}^{k_i}(1-q_{1,i}) q_{2,i+N}^{b_i}(1-q_{2,i+N}),
\end{align}
exists.
\end{Theorem}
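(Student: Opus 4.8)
The plan is to recognize the proposed EPN as a special case of the G-network with batch removal studied in \cite{gelenbe_1993BatchRemove}, and then to invoke the product-form theorem proved there, after verifying that the traffic equations in (\ref{eq:traffic}) correctly encode the dynamics of our model. First I would make the correspondence explicit: queues $1,\dots,N$ carry Class~1 customers (jobs) and queues $N+1,\dots,2N$ carry Class~2 customers (EPs); external arrivals are Poisson at rates $\lambda_i$ to $W_i$ and $\gamma_i=\gamma p_i$ to $E_i$; the service rate at $E_i$ is $w_i+\delta_i$, after which the departing EP is routed to $W_i$ with probability $w_i/(w_i+\delta_i)$ as a \emph{negative} (batch-removal) customer and leaves the network with probability $\delta_i/(w_i+\delta_i)$; jobs at $W_i$ are never served on their own ($r_{1,i}=0$) and are only removed in batches triggered by arriving EPs. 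This matches exactly the ingredients of the batch-removal G-network, with batch-size distribution $\pi_{i,s}$ at $W_i$ and no batch removals at the ES queues.

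Next I would check that the quantities $q_{1,i}$ and $q_{2,i+N}$ defined by the fixed-point relations solve the traffic (flow-balance) equations of the batch-removal model. The nontrivial term is $\Lambda_{1,i}^-$, the effective rate of negative (batch) arrivals to $W_i$. An EP reaches $W_i$ at rate $q_{2,i+N}w_i$ (the ES is nonempty with probability $q_{2,i+N}$ and sends energy to the WS at rate $w_i$); when it arrives it removes $\min(K_i,b_i)$ jobs. Conditioning on the stationary geometric law $\Pr[K_i=k]=q_{1,i}^k(1-q_{1,i})$ and on $b_i=s$, the expected number of jobs removed per EP arrival works out to $\sum_{s\ge1}\pi_{i,s}\sum_{k=1}^{\infty}\min(k,s)\,q_{1,i}^k(1-q_{1,i})$, and summing the geometric series gives the factor $\bigl[\tfrac{1-\sum_{s\ge1}q_{1,i}^s\pi_{i,s}}{1-q_{1,i}}\bigr]$ appearing in (\ref{eq:traffic}). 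So $\Lambda_{1,i}^-$ as written is precisely the mean job-removal rate that must balance the job arrival rate $\lambda_i$ in steady state, and with $r_{1,i}=0$ the relation $q_{1,i}=\Lambda_{1,i}^+/(r_{1,i}+\Lambda_{1,i}^-)$ is exactly the flow-balance equation for $W_i$. For $E_i$ the traffic equation is the elementary $q_{2,i+N}=\gamma_i/(w_i+\delta_i)$ of an $M/M/1$-type queue, since EPs receive no negative arrivals.

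Having identified the model and verified the traffic equations, the product form is then immediate from the main theorem of \cite{gelenbe_1993BatchRemove}: under the stated hypothesis that a nonnegative solution with $0<q_{c,i}<1$ exists, the stationary distribution factorizes over queues, each WS queue being geometric with ratio $q_{1,i}$ and each ES queue geometric with ratio $q_{2,i+N}$, which is the displayed formula. I would close by remarking that the only substantive verification is the batch-removal term, and that the decoupling of the $N$ WS--ES pairs (no job or EP transfers between pairs) makes the $2N$-queue traffic system reduce to $N$ independent pairs of scalar equations, so the hypothesis $0<q_{c,i}<1$ is simply $\gamma p_i<w_i+\delta_i$ together with the condition that the job-removal capacity exceeds $\lambda_i$. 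The main obstacle is not any deep estimate but getting the batch-removal bookkeeping exactly right — in particular confirming that the $\min(K_i,b_i)$ truncation (jobs $\le$ batch size wipes the queue) produces precisely the rational expression in $q_{1,i}$ written in (\ref{eq:traffic}), after which everything follows by citation.
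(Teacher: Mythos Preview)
Your proposal is correct and follows essentially the same approach as the paper: the paper's entire proof is a one-line citation (``Proof is given in \cite{gelenbe1992stability}''), and you do exactly this --- identify the EPN as a special case of the batch-removal G-network and invoke the existing product-form result --- with the added (and welcome) care of spelling out why the traffic equations (\ref{eq:traffic}) correctly encode the dynamics. If anything, your write-up is more informative than the paper's own proof.
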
 
\begin{proof}
Proof is given in \cite{gelenbe1992stability}.
\end{proof}
In the steady-state, the PFS is the joint probability distribution of the number of jobs at WSs and the number of EPs at ESs. 

\section{The EPN and its Optimization Problem}
\label{opt_problem_with_GA}
% Defind and expalin the cost function

In this paper, we consider an optimization problem. The objective is to minimize a cost function consisting of the average response time for jobs and energy loss, by choosing an optimal energy distribution. 
We propose the composite cost function that needs to be minimized:
\begin{equation}
C= W+ E,
\end{equation}
where 

\begin{eqnarray}
W&=& \sum_{i=1}^N \frac{\lambda_i}{\sum_{i=1}^N \lambda_i} \frac{(r_{1,i}+\Lambda_{1,i}^-)^{-1}}{1-q_{1,i}}\nonumber \\
&=&\frac{1}{\sum_{i=1}^N \lambda_i} \sum_{i=1}^N \frac{q_{1,i}}{1-q_{1,i}}, \\
E&=& \sum_{i=1}^N q_{2,i+N} \delta_i +q_{2,i+N} w_i (1-q_{1,i}). 
\end{eqnarray}

By Little's Law, the $W$ represents the overall average response time of jobs waiting to be served at each of the WSs.  The $E$ represents the total rate of EPs lost in the network. The first term of $E$ regards the rate of EPs loss due to storage leakages and the second term of $E$ regards the rate of EPs lost in the network when EPs are sent to idle WSs. 
With a specific constraint for this case with $\sum_{i=1}^N p_i =1$, for $1 \leq i \leq N$, we have:
\begin{eqnarray}
q_{1,i}&=&\frac{\lambda_i}{q_{2,i+N} \cdot w_i[\frac{1-\sum_{s=1}^\infty q_{1,i}^s \cdot  \pi_{i,s}}{1-q_{1,i}}]}, \label{eq:q1}\\
q_{2,i+N} &=& \frac{\gamma p_i}{w_i + \delta_i}. 
\end{eqnarray}

The optimization problem  is to choose optimal $p=(p_1,\dots,p_N)$ so as to minimize the composite cost function $C$ for the given value of network parameters, i.e. $\gamma$, $w_i$, $\lambda_i$, $\delta_i$ and $u_i$ at each  $i^{th}$ WS and $i^{th}$ ES.

In the following sections, we first make a geometric assumption regarding the probability distribution $\pi_{i,s}$ in order to obtain convenient analytical results. Under this assumption, we analyze the network in the case of the  $N=2$, and $N \geq 3$ respectively in Section \ref{ana_problem_with_GA}.  We also consider  Proposition \ref{prop1}  that guarantees the existence of a globally optimal solution under the geometric assumption.  

A special case with sufficiently large $\gamma$ EPs/sec is discussed in Section \ref{problem_special}. With a proper approximation, one simplified cost function $\hat{C}$ is given to solve the optimization problem.

\section{Analysis of the Optimization Problem}
% The case for the geometric batch removal probaility and the simplified cost function
\label{ana_problem_with_GA}

To simplify the analysis, we make a geometric assumption regarding the probability distribution $\pi_{i,s}$. We assume that 
\begin{eqnarray}
\pi_{i,s} = \frac{(1-u_i)u_i^s}{u_i},
\label{eq:GA}
\end{eqnarray} 
where $0<u_i<1$ is a real number. Note that  we have 
$
\sum_{s=1}^\infty  (1-u_i)u_i^{s-1} =1.
$
The $\pi_{i,s}$ is the probability distribution which regards the number of jobs that can be served by one EP. With the geometrical assumption, it is convenient to analyze the optimization problem analytically.

Substituting (\ref{eq:GA}) into (\ref{eq:q1}), we have 

\begin{equation}
\begin{aligned}
q_{1,i} =& \frac{\lambda_i}{q_{2,i+N} w_i} [\frac{1-\sum_{s=1}^N q_{1,i}^s u_i^s \frac{1-u_i}{u_i}}{1-q_{1,i}}]^{-1}  \\
=& \frac{\lambda_i}{q_{2,i+N} w_i} [\frac{1-\frac{(1-u_i)q_{1,i}}{1-u_i q_{1,i}}}{1-q_{1,i}}]^{-1}  \\
=&\frac{\lambda_i}{u_i \lambda_i + q_{2,i+N}w_i}.\\
\end{aligned}
\end{equation}
Then the cost functions $W$ and $E$ are, respectively:
\begin{eqnarray}
W&=& \sum_{i=1}^N \frac{1}{\lambda^+} \cdot \frac{\lambda_i}{\lambda_i(u_i-1)+\sigma_i \gamma p_i},\\
E%&=&\sum_{i=1}^N q_{2,i+N} \delta_i + q_{2,i+N} w_i (1-q_{1,i})\nonumber \\
 %&=&\sum_{i=1}^N q_{2,i+N}(\delta_i+w_i) -(\lambda_i - \lambda_i u_i q_{1,i})\nonumber \\
 &=& \gamma-\lambda^+ + \sum_{i=1}^N \frac{\lambda_i^2 u_i}{\lambda_i u_i + \sigma_i \gamma p_i},
\end{eqnarray}
where 
$
\sigma_i = \frac{w_i}{w_i + \delta_i},
$
denotes the energy efficiency regarding to leakages at $E_i$. Moreover, $\lambda^+ = \sum_{i=1}^N \lambda_i$ is a constant.
Note that it is an optimization problem subject to the constraint $\sum_{i=1}^N p_i =1$. Therefore, for any $p_j$, it follows  $p_j=1-\sum_{i=1,i\neq j}^N p_i$ and we have 
\begin{equation}
\frac{\partial p_j}{\partial p_i} = -1, ~\forall i \neq j. 
\end{equation}

\subsection{The case of $N=2$}

In the case of $N=2$, we have two pairs of WSs and ESs. The composite cost function can be written as 
\begin{align}
&C= \gamma - \lambda^+ + \frac{1}{\lambda^+} \frac{\lambda_1}{\lambda_1(u_1-1)+\sigma_1 \gamma p_1}+ \frac{\lambda_1^2 u_1}{\lambda_1 u_1 + \sigma_1 \gamma p_1} \nonumber \\
&+ \frac{1}{\lambda^+} \frac{\lambda_2}{\lambda_2(u_2-1)+\sigma_2 \gamma (1-p_1)}+ \frac{\lambda_2^2 u_2}{\lambda_2 u_2 + \sigma_2 \gamma (1-p_1)}, 
\end{align} 
where $p_2=1-p_1$. The constraint is removed by replacing $p_2$ with $(1-p_1)$. The derivative of $C$ with respect to $p_1$ is 
\begin{align}
&\frac{d C}{d p_1} = \frac{-\lambda_1 \sigma_1 \gamma }{\lambda^+[\lambda_1 (u_1-1)+\sigma_1 \gamma p_1]^2}+\frac{- \lambda_1^2 \sigma_1 u_1 \gamma }{[\lambda_1 u_1+\sigma_1 \gamma p_1]^2} \nonumber +\\
&\frac{\lambda_2 \sigma_2 \gamma }{\lambda^+[\lambda_2 (u_2-1)+\sigma_2 \gamma(1- p_1)]^2}+\frac{ \lambda_2^2 \sigma_2 u_2 \gamma }{[\lambda_2 u_2+\sigma_2 \gamma (1- p_1)]^2}.
\end{align} 
Let $dC / d p_1 =0 $, we have a function:

\begin{align}
\frac{\lambda_2 \sigma_2}{[\lambda_2 (u_2-1)+\sigma_2 \gamma - \sigma_2 \gamma p_1]^2}+\frac{\lambda_2^2 \sigma_2 u_2 \lambda^+}{[\lambda_2 u_2 + \sigma_2 \gamma-\sigma_2 \gamma p_1]^2}  =\nonumber \\
\frac{ \lambda_1 \sigma_1}{[\lambda_1 (u_1-1)+\sigma_1 \gamma p_1]^2}+\frac{\lambda^2_1 \sigma_1 u_1 \lambda^+ }{[\lambda_1 u_1 + \sigma_1 \gamma p_1]^2}. 
\label{eq:2pair}
\end{align}
By solving (\ref{eq:2pair}) with respect to $p_1 \in [0,1]$, the optimal solutions $p_1^*$ and $p_2^*=1-p_1^*$ are obtained. 
In order to illustrate the analytically obtained optimal solution, we consider a numerical example with two pairs of WSs and ESs, and their parameters are shown in Table \ref{table:1}. 

\begin{table}[h]
\renewcommand{\arraystretch}{1.3}
\caption{Parameters of the EPN with $N=2$}
\centering
\begin{tabular}{cc||cc}
\hline
Parameters & Values & Parameters & Values \\
 \hline
$\gamma$ & $150$ EPs/sec & $\lambda_1,~\lambda_2, $ & $50,60$ jobs/sec  \\
$\delta_1,~\delta_2$& $10, 6$ EPs/sec & $w_1,~w_2,$&$100, 80$ EPs/sec
 \\ $u_1,~u_2$& $0.2, 0.2$ &  \\
\hline
\end{tabular}
\label{table:1}
\end{table}  
To guarantee the $q_{c,i}<1$ for all $c,i$, the condition 
\begin{equation}
\frac{\lambda_i (1-u_i)}{\gamma \sigma_i} < p_i <\frac{w_i +\delta_i}{\gamma}
\end{equation}  
must hold. Numerical conditions:
\begin{eqnarray*}
0.2933 < p_1 < 0.7333 \mbox{ and } 0.3400 < p_2 <0.5733,
\end{eqnarray*}
and 
$
p_1 + p_2=1
$
must hold. Then we calculate the values of the cost function $C$ with all $(p_1,p_2)$ and compare them to the optimal solution. The results are shown in Figure \ref{fig:P1_1} in which the x-axis is $p_1$ while $p_2$ follows from $p_2=1-p_1$.  

\begin{figure}[h]
\centering
  \includegraphics[width=0.7\linewidth]{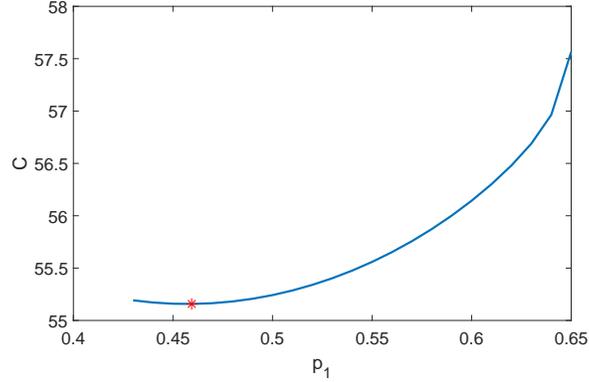} 
  \captionof{figure}{The cost function of the system with all $(p_1,p_2)$ pairs. The red point designates the optimal solution by solving (\ref{eq:2pair}). Note that the range $p_i$ for all $i$ is not $[0,1]$ due to constraints.}
  \label{fig:P1_1}
\end{figure}
The theoretical result is given by solving (\ref{eq:2pair}). The optimal solution $(p_1^*,p_2^*)=(0.4594,0.5406)$ produces the minimal cost function $C= 55.1577$, $E=55.1177$ EPs/sec and $W=0.0400$ seconds. 
\begin{Comment}
Moreover we observed that the maximum values of the W and E are $0.6609$ seconds and $56.9040$ EP/s. Average response time of jobs waiting to be served at each of the WSs is reduced by 0.6209 seconds. We can also save $1.7863$ EPs in every second. Additionally, in the numerical simulation, we found the minimal values of $W$ and $E$ when we try to minimize only $W$ or $E$ instead of minimizing the composite cost function $C$. They are $0.0399$ seconds and $55.1177$ EPs/sec respectively. The optimal solutions obtained by minimizing the composite cost function are close to the real minimal values.
\end{Comment}

\subsection{The case of $N \geq 3$}
In the case $N \geq 3$, the composite function $C$ can be rearranged to 
\begin{align}
& C=\gamma - \lambda^+ + \frac{1}{\lambda^+} \frac{q_{1,1}}{1-q_{1,1}}+ \lambda_1 u_1 q_{1,1} \nonumber + \sum_{i=2}^N \frac{1}{\lambda^+} \frac{q_{1,i}}{1-q_{1,i}}+ \lambda_i u_i q_{1,i}.
\end{align}
The partial  derivative of $C$ with respect to $q_{1,i}$ for $i \geq 2$ is 
\begin{align}
\frac{\partial C}{\partial q_{1,i}} = \frac{1}{\lambda^+(1-q_{1,i})^2}+\lambda_i u_i  \nonumber +(\frac{1}{\lambda^+(1-q_{1,1})^2}+\lambda_1 u_1)\big(-\frac{\sigma_1 \lambda_i}{\sigma_i \lambda_1}( \frac{q_{1,1}}{q_{1,i}})^2 \big). 
\end{align}
Let $\partial C / \partial q_{1,i} = 0$, we have $N-1$ equations for $i=2,\dots,N$:
\begin{align}
\frac{1}{\lambda^+(1-q_{1,i})^2}+\lambda_i u_i = f_i (\frac{1}{q_{1,i}})^2 . \label{eq:3pair}
\end{align}
where
\begin{equation}
f_i = (\frac{1}{\lambda^+(1-q_{1,1})^2}+\lambda_1 u_1)\frac{\sigma_1 \lambda_i}{\sigma_i \lambda_1}(q_{1,1})^2.
\end{equation}
Rearrange (\ref{eq:3pair}), we have
\begin{align}
 q_{1,i}^4-2 q_{1,i}^3+&\frac{\lambda_i u_i \lambda^+ +1 - f_i}{\lambda_i u_i \lambda^+} q_{1,i}^2
 +\frac{2f_i}{\lambda_i u_i} q_{1,i}-\frac{f_i}{\lambda_i u_i} = 0. \label{eq:3pair_quartic}
\end{align}
The optimal solution is obtained by solving $N-1$ equations for $i \in \{2,\dots,N \}$ given in (\ref{eq:3pair_quartic}) and one constraint:
\begin{eqnarray}
\sum_{i=1}^N p_i = \sum_{i=1}^N \frac{\lambda_i}{\sigma_i \gamma}(\frac{1}{q_{1,i}}-u_i)=1,
\end{eqnarray}
simultaneously. Now we present a sufficient condition which guarantees existence of a globally optimal solution in Proposition \ref{prop1}.

\begin{Proposition}
\label{prop1}
Under the geometric assumption, a global optimal solution of the composite cost function $C$ exists if the PFS given in Theorem \ref{theorem1}
exists.
\end{Proposition}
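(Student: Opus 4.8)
The plan is to reduce the proposition to the statement that a continuous function attains its infimum on a (relatively) compact set, using the blow-up of the waiting-time term near one part of the feasible boundary as a substitute for coercivity. First I would make the feasible region explicit. By Theorem~\ref{theorem1} the PFS exists for a given distribution $p=(p_1,\dots,p_N)$ exactly when $0<q_{1,i}<1$ and $0<q_{2,i+N}<1$ for every $i$; under the geometric assumption, substituting the closed forms $q_{1,i}=\frac{\lambda_i}{u_i\lambda_i+\sigma_i\gamma p_i}$ and $q_{2,i+N}=\frac{\gamma p_i}{w_i+\delta_i}$ turns this into the system $\sum_{i=1}^N p_i=1$ together with $\ell_i<p_i<h_i$ for all $i$, where $\ell_i=\frac{\lambda_i(1-u_i)}{\gamma\sigma_i}>0$ and $h_i=\frac{w_i+\delta_i}{\gamma}$. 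Denote this set by $\mathcal{D}$; the hypothesis ``the PFS exists'' is precisely $\mathcal{D}\neq\emptyset$. Since $\mathcal{D}$ lies in the probability simplex it is bounded, and its closure $\overline{\mathcal{D}}$ is a compact convex polytope. On $\mathcal{D}$ the cost decomposes as $C=\gamma-\lambda^+ +\sum_{i=1}^N g_i(p_i)$ with $g_i(p_i)=\frac{1}{\lambda^+}\frac{\lambda_i}{\lambda_i(u_i-1)+\sigma_i\gamma p_i}+\frac{\lambda_i^2 u_i}{\lambda_i u_i+\sigma_i\gamma p_i}$, and each $g_i$ is smooth, positive, strictly decreasing and strictly convex on $(\ell_i,\infty)$; hence $C$ is continuous and strictly convex on $\mathcal{D}$, which will in addition give uniqueness of the optimum once existence is settled.

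The core step is the behaviour of $C$ near $\partial\mathcal{D}$. The only boundary piece that matters is $\{p_i=\ell_i\}$: if $p^{(n)}\in\mathcal{D}$ tends to a boundary point with $p_i=\ell_i$ for some $i$, then $q_{1,i}(p^{(n)})\to1$, and the corresponding summand $\frac{1}{\lambda^+}\frac{\lambda_i}{\lambda_i(u_i-1)+\sigma_i\gamma p_i}$ of $W$ diverges to $+\infty$, so $C(p^{(n)})\to+\infty$. Fixing any $\bar p\in\mathcal{D}$ and putting $S=\{p\in\mathcal{D}:C(p)\le C(\bar p)\}$, this divergence shows that no limit point of $S$ can have a coordinate equal to any $\ell_i$, so $S$ is bounded and its only candidate new limit points lie on the faces $\{p_i=h_i\}$, where $C$ is still given by the same finite closed-form expression. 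Consequently $S$ together with those faces is compact and $C$ is continuous on it, so the minimum is attained at some $p^\star$ with $q_{1,i}(p^\star)<1$ for all $i$; if moreover $p^\star_i<h_i$ for all $i$ then $p^\star\in\mathcal{D}$ and, being a minimizer of a strictly convex function, it is the unique global optimum, which is what is claimed.

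The hard part is exactly that last proviso. On the part of $\partial\mathcal{D}$ where $q_{2,i+N}\to1$ (equivalently $p_i\to h_i$, a critically loaded storage) the cost $C$ does \emph{not} blow up, so a priori the infimum could be approached with some coordinate pressed against its upper bound and fail to be attained strictly inside $\mathcal{D}$. To close this I would analyse the first-order (KKT) conditions in conjunction with the separable convex structure: at a minimizer $p^\star$ of the relaxed, upper-closed problem one has $g_i'(p^\star_i)=\mu$ on the coordinates strictly between their bounds and $g_i'(p^\star_i)\le\mu$ on those at an upper bound, and one must show — using that $q_{2,i+N}<1$ holds at at least one feasible point and the monotonicity of the $g_i'$ — that no upper bound is active at $p^\star$. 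Equivalently, one may first record that $C$ extends continuously to the faces $q_{2,i+N}=1$, take the (now genuinely attained) minimum over $\{\,\sum_i p_i=1,\ \ell_i<p_i\le h_i\,\}$, and then verify that this minimizer respects the strict bounds. This verification, possibly needing a mild regularity hypothesis ruling out a tight upper bound at the optimum, is the only nontrivial point; the rest of the argument — the reduction to the polytope, the boundedness, and the blow-up at $\{p_i=\ell_i\}$ — is routine.
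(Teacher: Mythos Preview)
Your route is genuinely different from the paper's, and more careful. The paper's argument is short: it computes the Hessian of $C$ with respect to $p$ (diagonal, by separability), checks that each diagonal entry
\[
\frac{\partial^2 C}{\partial p_i^2}=\frac{2\lambda_i\gamma^2\sigma_i^2}{\lambda^+[\gamma\sigma_i p_i-\lambda_i(1-u_i)]^3}+\frac{2\lambda_i^2 u_i\sigma_i^2\gamma^2}{(\lambda_i u_i+\sigma_i\gamma p_i)^3}
\]
is positive on the PFS region $\ell_i<p_i<h_i$, concludes strict convexity, and then runs a two-line contradiction argument showing two distinct minimizers cannot coexist. That is the entire proof. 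In particular, the paper never separates existence from uniqueness: it tacitly assumes a minimizer is present and only argues it is unique.

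What you do instead --- compactness of the simplex, blow-up of the waiting-time term as $p_i\downarrow\ell_i$, and a sublevel-set argument --- is the right machinery for existence, and your observation that $C$ stays finite on the faces $\{p_i=h_i\}$ (so the infimum could a priori be approached with $q_{2,i+N}\to 1$) pinpoints a real gap that the paper simply does not acknowledge. Your KKT sketch for ruling out an active upper bound is the natural way to close it, but note that without an extra hypothesis it may genuinely fail: nothing in the standing assumptions prevents the unconstrained separable optimum from having some $p_i^\star\ge h_i$, in which case the constrained minimizer over $\overline{\mathcal D}$ sits on an upper face and there is no minimizer in the open set $\mathcal D$. So your caveat about ``a mild regularity hypothesis'' is not over-caution; it is exactly what is missing, both in your argument and in the paper's.
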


\begin{proof}
In \cite{gelenbe1992stability}, Gelenbe and Schassberger prove the PFS exists if $q_{c,i}<1$ for all $c,~i$ holds. Thus conditions 
\begin{eqnarray}
q_{1,i}=\frac{\lambda_i}{\lambda_i u_i + \gamma \sigma_i p_i} < 1, \quad
q_{2,i+N} = \frac{\gamma p_i}{w_i+\delta_i} <1,
\end{eqnarray} 
must hold. 
Hence, the PFS of the EPN exists if the condition
\begin{eqnarray}
\frac{\lambda_i (1-u_i)}{\gamma \sigma_i} < p_i <\frac{w_i +\delta_i}{\gamma},
\end{eqnarray} 
holds for all $i$.
Then the Hessian matrix $\nabla_{pp} C$ is an diagonal matrix with $N$ diagonal entries 
\begin{eqnarray}
\frac{\partial C^2}{\partial^2 p_i} = \frac{2\lambda_i \gamma^2 \sigma_i^2}{\lambda^+ [\gamma \sigma_i p_i - \lambda_i(1-u_i)]^3}+\frac{2\lambda_i^2 u_i \sigma_i^2 \gamma^2}{\lambda_i u_i + \sigma_i \gamma p_i} >0, \mbox{ for } i=1,\dots,N ,
\end{eqnarray}
if the PFS exists. The cost function $C$ is a strictly convex function with respect to $p$ since the Hessian $\nabla_{pp} C \succ 0$. 

The following proof is by contradiction.
Assume $p^a$ is a minimal point of the cost function and $p^b$ is another minimal point of the cost function such that $C(p^a)=C(p^b)$. Since the cost function is strictly convex, we have 
\begin{equation}
C(\alpha p^a +(1-\alpha) p^b) < \alpha C(p^a) +(1-\alpha) C(p^b)=C(p^a),
\end{equation} 
where $0 \leq \alpha \leq 1$ is a real number. This contradicts to the assumption $p^a$ is the minimal point of the cost function. Therefore, there exists a globally optimal solution of the composite cost function $C$ if the PFS exits. 
\end{proof}

In order to illustrate the analytically result with $N \geq 3$, we consider a numerical example without loss of generality.  There are three pairs of WS and ES nodes, and their parameters shown in Table \ref{table:2}.

\begin{table}[h]
\renewcommand{\arraystretch}{1.3}
\caption{Parameters of the EPN with $N=3$}
\label{table:2}
\centering
\begin{tabular}{cc||cc}
\hline
Parameters & Values & Parameters & Values\\
 \hline
$\gamma$ & $150$ EPs/sec & $\lambda_1,~\lambda_2, ~\lambda_3$ & $50, 30, 10$ jobs/sec  \\
$\delta_1,~\delta_2,~\delta_3$& $10, 8,6$ EPs/sec &$w_1,~w_2,~w_3$&$100, 80,50$ EPs/sec\\
$u_1,~u_2,~u_3$& $0.2, 0.2,0.2$ \\
\hline
\end{tabular}
\end{table}   
The constraint and numerical conditions:   
\begin{eqnarray*}
&0.2933 < p_1 < 0.7333, ~ &0.1760 < p_2 <0.5867, \\ &0.0597<p_3<0.3733, ~&p_1 + p_2 +p_3 =1,
\end{eqnarray*}  
must hold to guarantees $q_{c,i} <1$ for all $c,i$. Then we calculate  the values of the cost function $C$ with all $(p_1,p_2,p_3)$ and compare them to the optimal solution. The results are shown in Figure \ref{fig:P2_1} in which the x-axis and y-axis are $p_1$ and $p_2$, while $p_3$ follows from $p_3=1-p_1-p_2$.    

\begin{figure}[h]
\centering
  \includegraphics[width=0.7\linewidth]{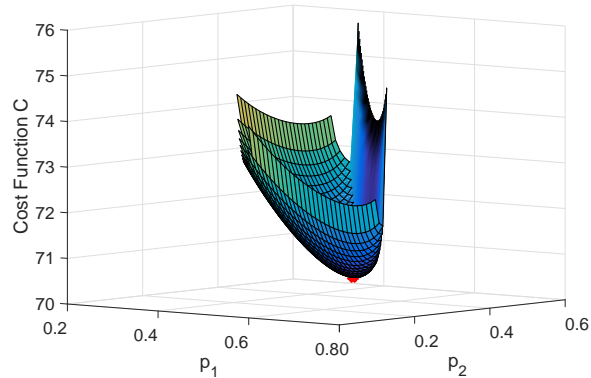} \captionof{figure}{The cost function $C$ of the system with all $(p_1,p_2)$ pairs. The red point designates the optimal solution. Note that the range $p_i$ for all $i$ is not $[0,1]$ due to constraints.}
  \label{fig:P2_1}
\end{figure}

The theoretical optimal solution $(q_{1,1}^*,q_{1,2}^*,q_{1,3}^*)=(0.5847,0.5835,0.5827)$ gives energy distribution probability  $(p_1^*,p_2^*,p_3^*)=(0.5538,0.3330,0.1132)$. Note that all $p_i$ satisfies the numerical conditions and the constraint. It yields $C=70.5602$, $E=70.5135$ EPs/sec and $W=0.0467$ seconds. 
\begin{Comment}
We observed that the maximum values of the W and E are $6.1308$ seconds and $73.8681$ EP/s. The average response time of jobs waiting to be served at each of the WSs is reduced by $6.0841$ seconds. We can also save $3.3546$ EPs in every second. In the numerical simulation, we also found the minimal values of $W$ and $E$ when we try to minimize $W$ or $E$ only. They are $0.0429$ seconds and $70.5133$ EPs/sec respectively. The difference between our optimal solutions and real minimal values are $0.0038$ seconds and $0.0002$ EPs/sec.
\end{Comment}

\section{A Special Case of the Optimization Problem}
\label{problem_special}
In this section, we discuss a special case when $q_{1,i}$ is small (close to 0) for all $i$. It implies that the rate of total harvesting energy is sufficiently large. 

When $q_{1,i}$ is small, the cost function $W$ regarding the overall delay is approximated to 
\begin{equation}
W=\frac{1}{\lambda^+} \sum_{i=1}^N \frac{q_{1,i}}{1-q_{1,i}} \approx  \frac{1}{\lambda^+} \sum_{i=1}^N q_{1,i}^2+q_{1,i}.
\end{equation}
The composite cost function becomes
\begin{eqnarray}
C =  \gamma-\lambda^+ + \sum_{i=1}^N \frac{1}{\lambda^+} q_{1,i}^2+(\frac{1}{\lambda^+} + u_i \lambda_i) q_{1,i}.
\end{eqnarray}
The  cost function $C$ is a monotonic increasing function in the range of $0 < q_{1,i} < 1$ for all $i$. Therefore, we proposed a simplified composite cost function based on Lagrangian multiplier:
\begin{align}
\hat{C} %&=\sum_{i=1}^N q_{1,i} + \beta (\sum_{i=1}^N p_i -1) \nonumber \\
&=\sum_{i=1}^N \frac{\lambda_i}{\lambda_i u_i + \sigma_i \gamma p_i} + \beta (\sum_{i=1}^N p_i -1),
\end{align}
where $\beta$ is a real number. Let $ \partial \hat{C} / \partial p_i =0 $, we derive solution:
\begin{eqnarray}
\hat{p}_i = \frac{\sqrt{\frac{\lambda_i}{\sigma_i}}}{\sum_ {i=1}^N \sqrt{\frac{\lambda_i}{\sigma_i}}} \big( 1+ \sum_{i=1}^N \frac{\lambda_i u_i}{\sigma_i \gamma} \big) - \frac{\lambda_i u_i}{\sigma_i \gamma}. 
\label{eq:estimated_p}
\end{eqnarray}
which fulfils the constraint $\sum_{i=1}^N p_i =1$. 

We consider a numerical example with sufficiently large energy harvesting rate, $\gamma$ to illustrate the analytically obtained optimal solution of the optimization problem when $q_{1,i}$ is small. There are three pairs of WS and ES nodes and their parameters are shown in Table \ref{table:3}. 

\begin{table}[h]
\renewcommand{\arraystretch}{1.3}
\caption{Parameters of the EPN with large $\gamma$}
\label{table:3}
\centering
\begin{tabular}{cc||cc}
\hline
Parameters & Values & Parameters & Values\\
 \hline
$\gamma$ & $230$ EPs/sec & $\lambda_1,~\lambda_2, ~\lambda_3$ & $5, 6, 5$ jobs/sec  \\
$\delta_1,~\delta_2,~\delta_3$& $10, 10,25$ EPs/sec &$w_1,~w_2,~w_3$&$100, 80,65$ EPs/sec\\
$u_1,~u_2,~u_3$& $0.2, 0.2,0.2$ \\
\hline
\end{tabular}
\end{table} 

The numerical conditions
\begin{eqnarray*}
&0.0191 < p_1 < 0.4783, ~ &0.0235 < p_2 <0.3913, \\ &0.0241<p_3<0.3913, ~&p_1 + p_2 +p_3 =1,
\end{eqnarray*}  
must hold to guarantee $q_{c,i}<0$ for all $c,i$. We calculate the value of the cost function $C$ with all $(p_1,p_2,p_3)$ and compare then to the solution $\hat{p}$ given by (\ref{eq:estimated_p}). Then we also compare the $\hat{C}$ given by $\hat{p}$ and minimal $C^*$ observed in Figure \ref{fig:P3_1}. The results are shown in Figure \ref{fig:P3_1} in which the x-axis and y-axis are $p_1$ and $p_2$.

\begin{figure}[h]
\centering
  \includegraphics[width=0.7\linewidth]{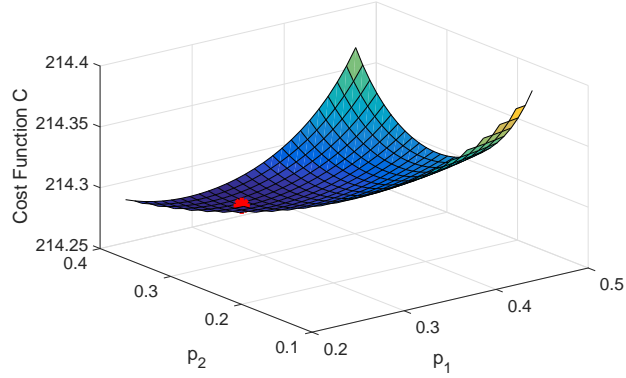} 
  \captionof{figure}{The cost function of the system with all $(p_1,p_2)$ pairs. The red point designates the  solution by solving (\ref{eq:estimated_p}). Note that the range $p_i$ for all $i$ is not $[0,1]$ due to constraints.}
  \label{fig:P3_1}
\end{figure}

The theoretical solution given by (\ref{eq:estimated_p}) is $ (\hat{p_1},\hat{p_2},\hat{p_3})=(0.3100,0.3429,0.3471)$. It produces the cost function $\hat{C}=214.2789$. 
At this point, we have $q_{1,1}=0.0760$, $q_{1,2}=0.0845$ and $q_{1,3}=0.0852$. They are all close to zero. Moreover, the minimal cost function observed in Figure \ref{fig:P3_1} is $C^*=214.2784$  and the error between $C^*$ and $\hat{C}$ is $0.005$.
\begin{Comment}
The theoretical solution gives optimal $\hat{W}=0.00167$ seconds and $\hat{E}=214.2622$ EPs/sec. In feasible region, the observed maximum values are $0.00245$ seconds and $214.3678$ EPs/sec respectively. The optimal solution reduces the average response time of jobs by $0.00078$ seconds and saves $0.1056$ EPs in every second. Moreover, the minimal $W^*$ and $E^*$ observed in Figure \ref{fig:P3_1} are $0.0167$ seconds and $214.2617$ EPs/sec. The difference between $\hat{W}$ and $W^*$, and $\hat{E}$ and $E^*$ are small. 
\end{Comment}
%%%%%%%%%%%%%%%%%%%%%%%%%%%%%%%%%%%%%%%%%%
\section{Conclusions}
\label{conclusions}
In this paper, we have considered an EPN model represented in  G-networks with batch removal. In the network, the jobs can be executed and removed under the effect of EPs when EPs arrive at a WS. We assume the size of the batch that one EP can remove follows a known probability distribution. Moreover, the ES may lose energy through leakage, and idle WSs will also consume energy.

We consider an optimization problem of the case that neither jobs nor EPs will be moved between WSs or ESs respectively. Since the total harvested energy rate is fixed, we consider an optimal energy distribution among ESs. The optimal distribution minimizes a cost function, which is a combination of the average response time of jobs and the rate of energy lost. This problem is solved analytically for a special class of probability distributions for the number of jobs processed with one EP. There is a proposition to guarantee the existence of the globally optimal solution. Moreover, we also discuss a special case when the fixed energy harvesting rate is sufficiently large. The original cost function is simplified to give an approximated solution.

Future work will investigate the minimization of average response time and energy loss with new control variables in more general EPN models. The probability transition matrix regarding movements of jobs in WSs and EPs in ESs can be considered. More cases regarding energy harvesting rate will be considered. Moreover, the the EPN paradigm with optimal energy distribution will be considered in some real-world applications.  
%
% ---- Bibliography ----
%
% BibTeX users should specify bibliography style 'splncs04'.
% References will then be sorted and formatted in the correct style.
%
\bibliographystyle{splncs04}
\bibliography{my_references}

\end{document}